\def\eqref#1{equation~\ref{#1}}
\def\1{\bm{1}}
\DeclareMathAlphabet{\mathsfit}{\encodingdefault}{\sfdefault}{m}{sl}
\SetMathAlphabet{\mathsfit}{bold}{\encodingdefault}{\sfdefault}{bx}{n}
\theoremstyle{plain}
\newtheorem{theorem}{Theorem}[section]
\newtheorem{proposition}[theorem]{Proposition}
\newtheorem{definition}[theorem]{Definition}
\theoremstyle{remark}
\newcommand{\methodname}{{\sc ENS}}
\newcommand{\x}{\bm{x}}
\renewcommand{\k}{\bm{k}}
\newcommand{\sk}{\textup{\textsf{sk}}}
\renewcommand{\k}{\bm{k}}
\newcommand{\cF}{\mathcal{F}}
\newcommand{\cK}{\mathcal{K}}
\newcommand{\cP}{\mathcal{P}}
\newcommand{\bbE}{\mathbb{E}}
\newcommand{\ens}{\textup{\textrm{ENS}}}
\title{An Ensemble Framework for Unbiased Language Model Watermarking}
\author{%
  Yihan Wu\thanks{Equal contribution}~, ~Ruibo Chen$^*$, Georgios Milis, ~Heng Huang \\
  Department of Computer Science\\
  University of Maryland, College Park\\
  \texttt{\{ywu42, rbchen, milis, heng\}}@umd.edu}
\begin{document}

\maketitle

\begin{abstract}
As large language models become increasingly capable and widely deployed, verifying the provenance of machine-generated content is critical to ensuring trust, safety, and accountability. Watermarking techniques have emerged as a promising solution by embedding imperceptible statistical signals into the generation process. Among them, unbiased watermarking is particularly attractive due to its theoretical guarantee of preserving the language model's output distribution, thereby avoiding degradation in fluency or detectability through distributional shifts. However, existing unbiased watermarking schemes often suffer from weak detection power and limited robustness, especially under short text lengths or distributional perturbations. In this work, we propose \methodname, a novel ensemble framework that enhances the detectability and robustness of logits-based unbiased watermarks while strictly preserving their unbiasedness. \methodname\ sequentially composes multiple independent watermark instances, each governed by a distinct key, to amplify the watermark signal. We theoretically prove that the ensemble construction remains unbiased in expectation and demonstrate how it improves the signal-to-noise ratio for statistical detectors. Empirical evaluations on multiple LLM families show that \methodname\ substantially reduces the number of tokens needed for reliable detection and increases resistance to smoothing and paraphrasing attacks without compromising generation quality. 
\end{abstract}

\section{Introduction}

The rapid progress of large language models has enabled them to generate human‑level text at scale, raising urgent questions about provenance, accountability, and misuse of AI‑generated content.  
A leading line of defence is \emph{watermarking}, which embeds hidden statistical signals into the generation process so that downstream detectors can later verify authorship with high confidence\,\citep{Aaronson2022,kirchenbauer2023watermark,christ2023undetectable,kuditipudi2023robust,hu2023unbiased,wu2023dipmark,chen2024mark,chen2024enhancing,liu2024adaptive,chen2025improved,mao2024watermark,dathathri2024scalable}.  
Ideally, a watermark should satisfy three properties:  
\emph{(i) imperceptibility}: it should leave fluency and semantics intact;  
\emph{(ii) detectability}: it should be reliably identified from a moderate amount of text; and  
\emph{(iii) robustness}: it should withstand natural corruptions and adversarial removal attempts.

Among these methods, a particularly important subclass is \emph{unbiased} (a.k.a.\ distortion-free) watermarking~\citep{Aaronson2022,christ2023undetectable,kuditipudi2023robust,hu2023unbiased,wu2023dipmark,chen2025improved,mao2024watermark,dathathri2024scalable}, which modifies the generation distribution in a way that preserves its expectation. In other words, the average output distribution of a watermarked model remains indistinguishable from the original model, ensuring that watermarking does not degrade text quality or introduce detectable artifacts. This property makes unbiased watermarking especially attractive for real-world deployment, where imperceptibility and content fidelity are critical.

However, unbiased watermarking faces key limitations in practice. Because the expected distribution is unchanged, the statistical signal available to the detector is inherently weak. This often necessitates longer token sequences for detection and leaves the watermark more vulnerable to attacks such as sampling smoothing, truncation, or paraphrasing. To address these challenges, we propose \methodname, an \emph{ensemble} framework that composes multiple unbiased watermark instances with independent keys to amplify the detection signal while provably preserving the LM distribution.  
Intuitively, each watermark layer makes a tiny, unbiased perturbation; stacking \(n\) such layers aggregates the bias under the detector’s statistic, boosting signal‑to‑noise ratio roughly by \(\sqrt{n}\) while keeping generation quality intact.

Our main contributions are summarized as follows:
\begin{itemize}
    \item We introduce \methodname, a general ensemble framework for unbiased watermarking. Our method sequentially applies multiple independent unbiased reweighting functions, using distinct keys, to enhance the detection signal without altering the expected distribution.

    \item We prove that \methodname\ remains unbiased in expectation under independent keys, and we analyze how the ensemble size affects the watermark’s detectability and robustness.

    \item We empirically demonstrate that \methodname\ significantly improves detection performance and robustness to common perturbations across multiple model families and watermarking baselines, while preserving generation quality.
\end{itemize}

    
    

\section{Related Work}
\citet{kirchenbauer2023watermark} refined the statistical watermarking framework initially introduced by \citet{Aaronson2022}. They divided the language model tokens into red and green lists and favored the green list tokens by adjusting their logits with a fixed increment $\delta$. However, the proposed approach can inevitably decrease the generation quality of the text. To maintain the original output distribution in watermarked content, researchers have investigated novel approaches for token distribution modification. There are generally two types of unbiased watermark: sampling based unbiased watermark and logits based unbiased watermark. For sampling based unbiased watermark, we use the pseudo-random token sampler to sample the next token. For logits based unbiased watermark, we adapt pseudo-random adjuster to modify the LM token logits, then sampling the next token based on the modified LM logits.

\noindent\textbf{Sampling based unbiased watermarks.}  \citet{Aaronson2022} pioneered an unbiased watermarking method using Gumbel-max to pseudo-randomly sample the next token with  prefix n-grams as watermark keys. \citet{christ2023undetectable} used inverse sampling as the watermark sampler on a binary language model with watermark keys based on token positioning. ITS-edit and EXP-edit \cite{kuditipudi2023robust} refined the inverse-sampling and Gumbel-max strategies with a predetermined watermark key list. \citet{hu2023unbiased} used inverse-sampling and design a model-based LLR score for detection.  STA-1 \cite{mao2024watermark} adapted rejected-sampling strategy to improve the quality of the watermarked text under the low-entropy scenarios. \citet{dathathri2024scalable} proposed SynthID, which used tournament sampling to achieved better detectability, however, their approach required sampling a redundant $2^{30}$ tokens for generating a single token, which significantly increase the computational costs. 

\noindent\textbf{Logits based unbiased watermarks.}  
\citet{hu2023unbiased} introduce the first logits based unbiased watermark, $\gamma$-reweight, for watermarking, though their detection method is not model-agnostic. DiPmark \cite{wu2023dipmark} enhanced the $\gamma$-reweight technique and introduced a model-agnostic detector. \citet{chen2025improved} introduced MCmark, which significantly improved the detectability of the unbiased watermark.

\section{Preliminaries}

\paragraph{Notation.} 
We follow the notation conventions from prior work~\cite{hu2023unbiased,wu2023dipmark,mao2024watermark,chen2025improved} to describe the language model generation process. Let \( V \) denote the vocabulary set with cardinality \( N = |V| \). Define \( \mathcal{V} \) as the set of all possible token sequences of any length (including the empty sequence). Given a prompt, the LM generates tokens autoregressively. The probability of generating the next token \( x_{t+1} \in V \), conditioned on the preceding sequence \( x_{1}, \ldots, x_{t} \), is denoted by \( P_M(x_{t+1} \mid \x_{1:t}) \in \mathcal{P} \), where \( \mathcal{P} \) represents the space of token distributions.

\paragraph{Watermark Generator.}
To embed watermarks, the service provider modifies the original LM distribution \( P_M \) using either sampling-based or logits-based watermarking methods. The watermarking process involves reweighting \( P_M(\cdot \mid \x_{1:t}) \) to a watermarked distribution \( P_{M,w}(\cdot \mid \x_{1:t}, k) \), where \( k \in \mathcal{K} \) is a private key sampled from a known key space distribution \( P_{\mathcal{K}}(k) \). In practice, this key is typically derived from a hash function \( h \) applied to a \emph{secret key} \( \mathsf{sk} \) and a context identifier (e.g., an \( n \)-gram~\cite{Aaronson2022} or token position~\cite{christ2023undetectable}).

\paragraph{Logits-Based Reweighting.}
In logits-based watermarking, the watermark generator applies a reweighting strategy \( F \in \mathcal{F}: \mathcal{P} \times \mathcal{K} \to \mathcal{P} \), which transforms the original LM distribution \( P_M(\cdot \mid x_{1:t}) \) into a watermarked distribution \( F(P_M(\cdot \mid x_{1:t}), k) \). Following~\cite{hu2023unbiased}, a reweighting function \( F \) is said to be \emph{unbiased} if it preserves the original distribution in expectation over the key space. Specifically, for any \( P_M \in \mathcal{P} \) and \( x_{t+1} \in V \),
\begin{equation}\label{eq:unbiased-F}
    \mathbb{E}_{k \sim P_{\mathcal{K}}} \left[ F(P_M(x_{t+1} \mid \x_{1:t}), k) \right] = P_M(x_{t+1} \mid \x_{1:t}).
\end{equation}

\paragraph{Watermark Detector.}
At inference time, the \emph{detector} receives:
(i)~the putative text sequence \(\x_{1:T}\),
(ii)~the watermark keys (or a means to regenerate them), and
(iii)~the public description of \(F\).
Detection is framed as a binary hypothesis testing problem: \( H_0 \): The sequence was generated without a watermark; \( H_1 \): The sequence was generated with a watermark.
The detector constructs a statistical score function based on \( k \) and \( F \), which exhibits different distributions under \( H_0 \) and \( H_1 \).
Under \(H_1\), the statistic is stochastically larger (or smaller) than under \(H_0\), enabling detection with controlled type‑I error.
This allows for distinguishing between watermarked and non-watermarked content using standard hypothesis testing techniques.

\section{Methodology}\label{sec:method}
In this section, we introduce an ensemble framework for logits-based unbiased watermarks, aiming to improve both detection reliability and robustness without sacrificing unbiasedness. 
We propose an \emph{ensemble} construction that composes any base logits-based unbiased watermarking rule $F$ with multiple independently drawn watermark keys. Intuitively, each application of $F$ injects an (unbiased) weak statistical signal aligned with its key; composing $n$ such transforms stacks $n$ weak signals. Because unbiasedness is defined in expectation over the key distribution, sequential composition preserves marginal unbiasedness to the underlying LM distribution when keys are independent. At detection time, we exploit the availability of all $n$ watermark keys to aggregate evidence across them, yielding improved statistical power at the same (or controlled) false positive rate (FPR). This section formalizes the construction, proves unbiasedness, develops score aggregation and significance testing procedures, and discusses practical design choices (key scheduling, context partitioning, computational cost, and robustness considerations).

\begin{definition}[Watermark ensemble] Given an original LM distribution $P_M(\cdot|\x_{1:t})\in\cP$, a logits-based reweight strategy $F\in\cF$, and $n$ watermark keys $\k_{1:n}$, we define the $n$-fold ensemble transform $\ens:\mathbb{N}\times\cP \times\cF\times\cK^n\to\cP$ recursively , 
\begin{equation}\label{eq:unbiased-ensemble}
    \ens(n,F,P_M(\cdot|\x_{1:t}),\k_{1:n})=\left\{
    \begin{aligned}
        &F(\ens(n-1,F,P_M(\cdot|\x_{1:t}),\k_{1:n-1}),k_n), &n>1;\\
        &F(P_M(\cdot|\x_{1:t}),k_1),  &n=1.
    \end{aligned}
    \right.
\end{equation}
For brevity we suppress the conditioning context $\x_{1:t}$ when unambiguous and write $\ens_n(P_M)$.
\end{definition}

The ensemble mechanism can be viewed as a multi-layer reweighting pipeline, where each layer applies a subtle perturbation governed by its corresponding key. The resulting distribution remains close to $P_M$ in total variation, but the watermark signal accumulates across layers, enhancing the detectability of the watermark. The detailed watermarking algorithm can be found in Alg.~\ref{alg:generator} and Alg.~\ref{alg:detector}.

\paragraph{Sequential vs. parallel views.} The recursive definition is \emph{sequential}: each new key reweights the distribution output by the previous step. An equivalent \emph{parallel} interpretation, which is useful for certain implementations, treats the final logits as a sum of per-key adjustments applied to the original logits before reweighting. Sequential and parallel forms are algebraically equivalent for common families of $F$ (e.g., additive logit shifts, multiplicative temperature scaling within greenlists).

\subsection{Unbiasedness of the Ensemble}\label{subsec:ensemble-unbiased}
A crucial property of unbiased watermarking is the preservation of the original distribution's expectation over random keys. We show that this property holds under our ensemble construction:
\begin{theorem}[Unbiasedness] \label{thm:ensemble-unbiased} If $F$ is an unbiased logits-based reweight strategy, and the watermark keys $\k_{1:n}$ are i.i.d. from $P_\cK$, then the $n$-ensemble of $F$ is also an unbiased logits-based reweight strategy, i.e., 
\begin{equation}\label{eq:unbiased-ensemble}
    \mathbb{E}_{\k_{1:n}\sim P_\cK^n}[\ens_n(P_M(\cdot|\x_{1:t}))]=P_{M}(\cdot \mid \x_{1:t})
\end{equation} holds for arbitrary $P_M(\cdot|\x_{1:t})\in\cP$,
\end{theorem}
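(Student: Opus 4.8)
The plan is to prove the statement by induction on the ensemble size $n$, using the tower property of conditional expectation to peel off one key at a time. The base case $n=1$ is exactly the defining unbiasedness property of $F$ in Eq.~\eqref{eq:unbiased-F}. For the inductive step, the key observation is that $\ens_n(P_M) = F\bigl(\ens_{n-1}(P_M),\, k_n\bigr)$, where $\ens_{n-1}(P_M)$ depends only on $\k_{1:n-1}$ and $k_n$ is independent of those.

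\textbf{Main steps.} First I would condition on $\k_{1:n-1}$ and take the expectation over $k_n$ alone. Since $\ens_{n-1}(P_M(\cdot|\x_{1:t}))$ is itself an element of $\cP$ (the ensemble map has codomain $\cP$), and since $k_n \sim P_\cK$ independently of $\k_{1:n-1}$, the unbiasedness of the single reweighting $F$ applied to that fixed distribution gives
\begin{equation}
    \mathbb{E}_{k_n \sim P_\cK}\bigl[ F(\ens_{n-1}(P_M), k_n) \,\big|\, \k_{1:n-1} \bigr] = \ens_{n-1}(P_M(\cdot|\x_{1:t})).
\end{equation}
Next I would take the outer expectation over $\k_{1:n-1} \sim P_\cK^{n-1}$ and apply the induction hypothesis, which states that $\mathbb{E}_{\k_{1:n-1}}[\ens_{n-1}(P_M(\cdot|\x_{1:t}))] = P_M(\cdot|\x_{1:t})$. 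Combining the two via the tower property $\mathbb{E}_{\k_{1:n}}[\,\cdot\,] = \mathbb{E}_{\k_{1:n-1}}\bigl[\mathbb{E}_{k_n}[\,\cdot\mid\k_{1:n-1}]\bigr]$ closes the induction. It is worth stating the argument coordinatewise (fixing $x_{t+1}\in V$) so that Eq.~\eqref{eq:unbiased-F} applies directly, then noting the identity holds for every token and hence as distributions.

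\textbf{The main subtlety} — rather than a deep obstacle — is making precise that the inner expectation can invoke unbiasedness of $F$ even though its input $\ens_{n-1}(P_M)$ is itself random: the point is that unbiasedness of $F$ holds \emph{for arbitrary} $P_M' \in \cP$, so conditionally on $\k_{1:n-1}$ the argument $\ens_{n-1}(P_M)$ is a fixed (deterministic) distribution in $\cP$, and independence of $k_n$ from $\k_{1:n-1}$ lets us apply Eq.~\eqref{eq:unbiased-F} with $P_M \gets \ens_{n-1}(P_M)$. One should also confirm that the codomain constraint $\ens:\mathbb{N}\times\cP\times\cF\times\cK^n\to\cP$ is genuinely respected at each recursion level (i.e.\ $F$ indeed maps $\cP\times\cK$ into $\cP$), which is immediate from $F \in \cF$. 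No interchange-of-limits or integrability issue arises since $V$ is finite and all quantities are bounded probabilities.
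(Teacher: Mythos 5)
Your proof is correct and follows essentially the same route as the paper's own argument: induction on $n$, conditioning on $\k_{1:n-1}$ and applying the law of iterated expectation, with the inner step justified because $F$ is unbiased for \emph{any} input distribution in $\cP$ and $k_n$ is independent of the earlier keys. Your explicit remarks on the coordinatewise statement and the codomain constraint are fine clarifications but do not change the substance of the argument.
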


\begin{proof}[Proof sketch] We argue by induction on $n$. The base case ($n{=}1$) reduces to Eq.~\ref{eq:unbiased-F}. Suppose Eq.~\ref{eq:unbiased-ensemble} holds for $n{-}1$. For $n$, condition on the first $n{-}1$ keys and apply the law of iterated expectation:
\begin{align}
\bbE_{\k_{1:n}}[\ens_n(P_M(\cdot|\x_{1:t}))(v)] &= \bbE_{\k_{1:n-1}}\Big[\bbE_{k_n}[F(\ens_{n-1}(P_M(\cdot|\x_{1:t}))\mid k_n)(v)]\Big]\\
&\stackrel{(a)}{=} \bbE_{\k_{1:n-1}}[\ens_{n-1}(P_M(\cdot|\x_{1:t}))(v)]\stackrel{(b)}{=} P_M(v|\x_{1:t}),
\end{align}
where $v$ is an arbitrary token in $V$, (a) applies Eq.~\ref{eq:unbiased-F} with $P=\ens_{n-1}(P)$ (valid because $F$ is unbiased \emph{for any input distribution}), and (b) uses the induction hypothesis. \qedhere
\end{proof}

\paragraph{Remarks.} (i) Independence is sufficient but not necessary: certain correlated key schedules also preserve unbiasedness if the marginal over the final key $k_n$ is uniform and $F$ obeys a linearity property. (ii) If $F$'s unbiasedness holds only approximately (e.g., due to numerical truncation or top-$K$ filtering), ensemble bias accumulates at most linearly in $n$.

\textbf{Watermark key design.} The unbiasedness of $\ens_n$ requires that the watermark keys be independent during \textit{a single generation step}. Recall that a watermark key $k$ is typically derived as $h(\sk, \text{n-gram})$, where $h$ is a hash function ensuring independence across different $(\sk, \text{n-gram})$ pairs.  
Following this principle, there are two ways to generate independent keys for ensemble watermarking: a) Use $n$ distinct hash functions $h_1, \dots, h_n$, so that in each generation step, $h_1(\sk,\text{n-gram}), \dots, h_n(\sk,\text{n-gram})$ are independent.  b) Use $n$ distinct secret keys $\sk_1, \dots, \sk_n$, so that $h(\sk_1,\text{n-gram}), \dots, h(\sk_n,\text{n-gram})$ remain independent. In our implementation, we use b) to ensure the independence of the watermark keys during generation.

\subsection{Detect Efficiency}
The detection of the ensemble watermark is straightforward.
If the ensemble is built upon a logit-based strategy $F$, we can simply apply the detection algorithm of $F$ $n$ times using the secret keys $(\sk_1, \dots, \sk_n)$, and then aggregate the resulting detection scores.

\paragraph{Detection with Ensemble Watermarks.}
Given a generated sequence \(\x_{1:T}\), a detector score function $S$, and secret keys $(\sk_1,...,\sk_n)$, the detector computes per‑key scores \(\{S(\x_{1:T},\sk_i)\}_{i=1}^n\) and aggregates them, e.g.
\[
S_{\ens}(\x_{1:T}) \;=\; \sum_{i=1}^n S(\x_{1:T},\sk_i).
\]
Because the watermark bias adds \emph{coherently} across independent keys, \(S_{\ens}\) enjoys a higher signal‑to‑noise ratio, enabling stronger hypothesis testing between \(H_0\) (unwatermarked) and \(H_1\) (watermarked).




\paragraph{Signal and Variance Behavior.}
Unbiasedness ensures that, a downstream observer without keys cannot distinguish $\ens_n(P)$ from $P$. Detection instead leverages \emph{conditional} shifts introduced by each key. Let $S(x_t,\sk_i)$ denote the per-token score computed by the detector when evaluated with secret key $\sk_i$. Under the alternative hypothesis $H_1$ (watermarked with the same keys), the statistic $S(x_t,\sk_i)$ has a mean shifted by a positive amount $\mu_i > 0$ relative to its mean under the null hypothesis $H_0$ (no watermark). The ensemble detector aggregates the shift across keys by combining either (a) raw per-key log-likelihood ratios, (b) standardized $z$-scores, or (c) non-parametric ranks.

Because generation uses \emph{all} $n$ keys, the expected per-token shift in $S(x_t,\sk_i)$ generally increases with $n$, but its variance also grows due to interactions among keys. For many logit-based reweighting $F$ of interest (greenlist reweighting~\citep{kirchenbauer2023watermark}; DiP-style permute-reweight \citep{wu2023dipmark}), the per-key signals add approximately linearly at small $n$, yielding a signal-to-noise ratio (SNR) that scales roughly as $\sqrt{n}$ for the aggregated statistic. We formalize one such regime below.

\begin{proposition}[Approximate SNR scaling]\label{prop:snr}
Assume (i) conditional independence of per-key centered scores given the underlying token, (ii) common per-key variance $\sigma^2$, and (iii) common mean shift $\mu$ under $H_1$. Then the sum statistic $S_{\ens}(\x_{1:T}) = \sum_{i=1}^n S(\x_{1:T},\sk_i)$ has mean $n\mu$ and variance $n\sigma^2$ under $H_1$, yielding $\mathrm{SNR}(S_{\ens})=\mu\sqrt{n}/\sigma$. Under $H_0$, $T$ has mean $0$ and variance $n\sigma_0^2$. Hence, for fixed FPR calibrated under $H_0$, power increases with $n$.
\end{proposition}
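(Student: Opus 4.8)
The plan is to treat this as a straightforward computation with independent random variables, reducing everything to the linearity of expectation and the additivity of variance under (conditional) independence. First I would fix the putative sequence $\x_{1:T}$ and work with the per-key scores $S(\x_{1:T},\sk_i)$ for $i=1,\dots,n$, treating them as random variables whose randomness comes from the watermark keys. Under assumption (iii), each has mean $\mu$ under $H_1$ (after centering by the $H_0$ mean, so that the $H_0$ mean is $0$), and under assumption (ii) each has variance $\sigma^2$ under $H_1$ (respectively $\sigma_0^2$ under $H_0$). Then $\bbE_{H_1}[S_{\ens}] = \sum_{i=1}^n \bbE_{H_1}[S(\x_{1:T},\sk_i)] = n\mu$ by linearity, with no independence needed for the mean.

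For the variance, I would invoke assumption (i): conditional independence of the centered per-key scores given the underlying token sequence. The cleanest route is to condition on the generated tokens, apply the law of total variance, and observe that the cross-covariance terms $\Cov(S(\x_{1:T},\sk_i),S(\x_{1:T},\sk_j))$ for $i\neq j$ vanish conditionally; one must also check that the ``between'' term in the law of total variance is handled consistently — in the idealized regime where the per-key variance $\sigma^2$ already accounts for all sources of fluctuation, this reduces to $\Var_{H_1}(S_{\ens}) = \sum_{i=1}^n \Var_{H_1}(S(\x_{1:T},\sk_i)) = n\sigma^2$. The same argument under $H_0$ gives mean $0$ and variance $n\sigma_0^2$. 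Dividing, $\mathrm{SNR}(S_{\ens}) = n\mu / \sqrt{n\sigma^2} = \mu\sqrt{n}/\sigma$, which is the claimed scaling.

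Finally, for the power statement, I would argue that a level-$\alpha$ test rejects when $S_{\ens}$ exceeds a threshold $\tau_n$ calibrated so that $\mathbb{P}_{H_0}(S_{\ens} > \tau_n) = \alpha$; since under $H_0$ the statistic has mean $0$ and standard deviation $\sqrt{n}\,\sigma_0$, a normal (or Chebyshev/CLT-type) approximation gives $\tau_n \approx z_{1-\alpha}\sqrt{n}\,\sigma_0$. The power is then $\mathbb{P}_{H_1}(S_{\ens} > \tau_n) \approx \Phi\!\big((n\mu - z_{1-\alpha}\sqrt{n}\,\sigma_0)/(\sqrt{n}\,\sigma)\big) = \Phi\!\big((\sqrt{n}\,\mu - z_{1-\alpha}\sigma_0)/\sigma\big)$, which is increasing in $n$ because the argument grows like $\sqrt{n}$. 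I would remark that this last step is the only one requiring an additional (mild) distributional assumption — asymptotic normality of the aggregated statistic, justified by a CLT across tokens and keys — and state it as such.

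\textbf{Main obstacle.} The genuine subtlety is not the algebra but pinning down assumption (i) honestly: the per-key scores share the \emph{same} generated token sequence, which was itself produced using all $n$ keys, so they are not unconditionally independent, and even conditionally the reweighting layers interact (as the paragraph preceding the proposition already concedes). I expect the proof to proceed by simply taking conditional independence as a stated idealization — which is exactly how the proposition is phrased — rather than deriving it; the honest caveat is that in reality the variance picks up positive cross terms of order $n^2$, which is why the text hedges with ``approximately'' and ``at small $n$.'' I would flag this explicitly so the reader understands the proposition characterizes an idealized regime rather than the exact behavior.
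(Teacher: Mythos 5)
Your computation is correct and is essentially the argument the paper intends: the paper states Proposition~\ref{prop:snr} without a separate proof, treating it as the immediate consequence of linearity of expectation, additivity of variance under the assumed conditional independence, and the resulting $n\mu/\sqrt{n\sigma^2}=\mu\sqrt{n}/\sigma$ ratio, with the power claim following from calibrating the threshold under $H_0$ exactly as you describe. Your explicit flag that assumption (i) is an idealization (the per-key scores share one generated sequence, so cross terms need not vanish) matches the hedging already present in the surrounding text and is a fair caveat rather than a gap.
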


\paragraph{Example (DiPmark detector).}
We illustrate the improved detecting strength of ensemble watermarking using the DiPmark detector. In the DiPmark detector, the red–green lists for generating each token are reconstructed using the secret key and the corresponding n-gram. The detector then counts the total number of green tokens in the sequence and applies a statistical test to determine watermark presence.

For a given key $\sk_i$, let $V_G(\x_{1:T};\sk_i)$ denote the number of tokens in the sequence $\x_{1:T}$ that fall into the \emph{green} set induced by $\sk_i$. The DiPmark score is
\[
S_{\mathrm{DiP}}(\x_{1:T},\sk_i)\;=\;\frac{V_G(\x_{1:T};\sk_i)}{T}-0.5.
\]
Under the $H_0$ (unwatermarked text), the green indicators are i.i.d.\ Bernoulli$(1/2)$, so by Hoeffding's inequality the (one-sided) $p$-value is bounded by $p_{\text{single}}\;\le\;\exp\!\bigl(-2T\,S_{\mathrm{DiP}}(\x_{1:T},\sk_i)^2\bigr).$

Ensembling $n$ DiPmark detectors:
Given keys $\sk_1,\ldots,\sk_n$, define the ensemble score
\[
S_{\ens}(\x_{1:T})\;=\;\sum_{i=1}^n S_{\mathrm{DiP}}(\x_{1:T},\sk_i)
\;=\;\frac{\sum_{i=1}^n V_G(\x_{1:T};\sk_i)}{T}\;-\;\frac{n}{2}.
\]
Equivalently, writing $Z_{t,i}\in\{0,1\}$ for the green indicator of token $t$ under key $\sk_i$, we have $S_{\ens}(\x_{1:T}) \;=\; n\!\left(\frac{1}{nT}\sum_{i=1}^n\sum_{t=1}^T Z_{t,i}-\frac{1}{2}\right)$.
Assuming independence across $t$ and $i$ under the null, Hoeffding's inequality applied to the average over $nT$ bounded variables yields
\[
p_{\ens}\;=\;\Pr\!\bigl(S_{\ens}\ge s\bigr)\;\le\;\exp\!\left(-\,\frac{2T}{n}\,s^2\right)
\quad\Longrightarrow\quad
p_{\ens}\;\le\;\exp\!\left(-\,\frac{2T}{n}\,S_{\ens}(\x_{1:T})^2\right).
\]
In the special case where all single-key scores are equal, $S_{\mathrm{DiP}}(\x_{1:T},\sk_1)=\cdots=S_{\mathrm{DiP}}(\x_{1:T},\sk_n)=s_0$, we have $S_{\ens}=ns_0$ and hence
\[
p_{\ens}\;\le\;\exp\!\bigl(-2Tn\,s_0^{2}\bigr)
\;=\;\Bigl[\exp\!\bigl(-2T\,s_0^{2}\bigr)\Bigr]^{\!n}
\;\approx\;\bigl(p_{\text{single}}\bigr)^{n}.
\]
Thus, ensembling $n$ independent DiPmark detectors improves the exponent linearly in $n$, i.e., the $p$-value decays exponentially with $n$. In practice, the ensemble $p$-value need not decay exponentially in $n$, because the per-key DiPmark detectors can become less informative when used jointly (e.g., due to the attenuation of individual scores). Nevertheless, ensembling typically improves overall detection power (efficiency) and yields a smaller $p$-value than any single detector.

\paragraph{Dealing with Key Dependence.}
In practice keys may not be statistically independent across tokens because the keying function $h(\sk, c)$ often reuses a fixed secret $\sk$ with context $c$ derived from overlapping $n$-grams or positions. Overlapping contexts induce correlations in the green/boosted sets across ensemble members, which in turn correlate the per-key scores. To address repeated contexts \(c\), we follow \cite{hu2023unbiased} and maintain a history of previously seen contexts during watermark generation; if the current \(c\) already appears in the history, we bypass watermarking and sample from the original (unwatermarked) distribution.

\subsection{Effect of the ensemble size $n$ on detectability.}\label{sec:ensemble size}
Increasing the ensemble size $n$ can strengthen detection by aggregating signal across keys, yielding (under standard independence assumptions) an exponentially decaying $p$-value in $n$. However, for logit–reweighting schemes that promote a per-key \emph{green} subset of size $\gamma |V|$, the intersection of promoted sets shrinks as $\gamma^n |V|$, reducing the chance that any promoted token lies in the model’s high-probability region, which makes the per-key effect attenuates and detectability saturates or even degrades.

Let $|V|$ be the vocabulary size. For secret key $\sk_i$, let $G_i\subset[|V|]$ be the green set with $|G_i|=\gamma |V|$ ($0<\gamma<1$). Under an \emph{intersection-at-generation} scheme, promotions apply to $G^{(\cap)} \;=\; \bigcap_{i=1}^n G_i,\;\text{so}\;\mathbb{E}\bigl[|G^{(\cap)}|\bigr] \approx \gamma^n |V|.$
Let $p(\cdot)$ denote the pre-boost next-token distribution at a step, and define the \emph{promoted mass} $M_n \;\coloneqq\; \sum_{v\in G^{(\cap)}} p(v).$

Since $\sum_v p(v)=1$ and each token lies in $G^{(\cap)}$ with probability $\gamma^n$, we have
$\mathbb{E}[M_n] \;=\; \gamma^n.$
If a logit boost of size $\varepsilon\leq1/\gamma$ (if $\varepsilon>1/\gamma$, then the sum of the boosted probability will be greater than 1) is applied to each promoted token during each logit-based reweighting, the per-step shift in a DiPmark-style score satisfies $\mu(n) \;\approx\; \varepsilon^n\mathbb{E}[M_n]=(\varepsilon\gamma)^n.$

\medskip\noindent\textbf{Aggregation gain vs.\ sparsity loss.}
Let $S_i$ be the per-key detector score and $S_{\ens}=\sum_{i=1}^n S_i$. Under standard independence assumptions and boundedness of scores, a Hoeffding/Chernoff bound yields
$p_{\ens}(\x_{1:n})
\;\lesssim\;
\exp\!\Big(-CTn\mu(n)^2\Big)
\;=\;
\exp\!\Big(-CTn(\varepsilon\gamma)^{2n}\Big),$
for some constant $C>0$ and sequence length $T$. This expression makes the trade-off explicit:
\[
\underbrace{n}_{\text{aggregation gain}}
\qquad\text{vs.}\qquad
\underbrace{(\varepsilon\gamma)^{2n}}_{\text{promotion sparsity}}.
\]
Define $g(n)\coloneqq n\,(\varepsilon\gamma)^{2n}$. Then $g(n)$ increases only up to
$n^\star \;\approx\; \frac{1}{2\log(1/\varepsilon\gamma)},$
and decreases thereafter. Consequently, $p_{\ens}$ typically \emph{decreases} with $n$ for $n\lesssim n^\star$ (improving detectability), but \emph{stalls} and can effectively \emph{worsen} for $n\gg n^\star$ as the promoted mass $\mathbb{E}[M_n]=\gamma^n$ becomes vanishingly small (in the extreme, $\gamma^n |V| \lesssim 1$ so no token is promoted at many steps).

\paragraph{Design implication.}
In practice, choose a \emph{moderate} ensemble size $n$ so that $n\approx n^\star$. For instance, if $\gamma=0.5,\varepsilon=1.8$, then $n^\star\approx 1/(2\log(1/0.9))\approx 4.75$, suggesting $n\in\{4,5\}$ is near optimal under strict intersection. Larger $n$ can be viable only if the ensemble design avoids strict intersection (e.g., by aggregating per-key logits or statistics) so that the per-key effect size $\mu$ does not collapse with $n$.




\begin{table}[t]
\centering
\caption{Detectability comparison of different watermarking methods under 250- and 500-token settings. 
We report True Positive Rates (TPR) at fixed False Positive Rates (FPR) of 0.1\%, 0.01\%, and 0.001\%, along with the median $p$-values (lower is better). }
\label{tab:detectability}
\resizebox{\textwidth}{!}{%
\begin{tabular}{@{}l|cccc|cccc@{}}
\toprule
\multirow{3}{*}{\begin{tabular}[c]{@{}l@{}}Watermarking\\ Methods\end{tabular}} & \multicolumn{4}{c|}{250 tokens} & \multicolumn{4}{c}{500 tokens} \\ \cmidrule(l){2-9} 
 & \multicolumn{3}{c|}{TPR@FPR} & \multirow{2}{*}{\begin{tabular}[c]{@{}c@{}}Median\\ p-value\end{tabular}$\downarrow$} & \multicolumn{3}{c|}{TPR@FPR} & \multirow{2}{*}{\begin{tabular}[c]{@{}c@{}}Median\\ p-value\end{tabular}$\downarrow$} \\ \cmidrule(lr){2-4} \cmidrule(lr){6-8}
 & 0.1\%$\uparrow$ & 0.01\%$\uparrow$ & \multicolumn{1}{c|}{0.001\%$\uparrow$} &  & 0.1\%$\uparrow$ & 0.01\%$\uparrow$ & \multicolumn{1}{c|}{0.001\%$\uparrow$} &  \\ \midrule
DiPmark($\alpha$=0.3) & 41.48\% & 32.22\% & 23.28\% & 4.48e-3 & 71.84\% & 61.68\% & 50.50\% & 8.60e-6 \\
ENS-DiPmark($\alpha$=0.3, $n$=5) & 75.05\% & 66.77\% & 58.79\% & 9.77e-7 & 96.30\% & 91.51\% & 86.62\% & 3.28e-14 \\
ENS-DiPmark($\alpha$=0.3, $n$=10) & 74.49\% & 63.21\% & 55.18\% & 2.49e-6 & 93.33\% & 87.51\% & 82.56\% & 3.75e-14 \\ \midrule
DiPmark($\alpha$=0.4) & 49.90\% & 39.18\% & 31.86\% & 1.20e-3 & 80.48\% & 70.94\% & 62.17\% & 1.27e-7 \\
ENS-DiPmark($\alpha$=0.4, $n$=5) & 74.74\% & 67.15\% & 58.52\% & 5.36e-7 & 94.18\% & 90.23\% & 85.29\% & 9.19e-15 \\
ENS-DiPmark($\alpha$=0.4, $n$=10) & 70.58\% & 58.74\% & 51.23\% & 8.19e-6 & 91.75\% & 85.99\% & 79.59\% & 1.19e-12 \\ \midrule
$\gamma$-reweight & 52.00\% & 42.02\% & 33.99\% & 7.47e-4 & 81.26\% & 72.45\% & 65.16\% & 4.58e-8 \\
ENS-$\gamma$-reweight($n$=5) & 73.98\% & 64.14\% & 54.92\% & 2.04e-6 & 94.12\% & 88.58\% & 83.59\% & 4.81e-15 \\
ENS-$\gamma$-reweight($n$=10) & 65.81\% & 55.54\% & 48.56\% & 1.75e-5 & 88.97\% & 83.35\% & 76.97\% & 2.07e-11 \\ \midrule
SynthID($m$=20) & 93.92\% & 88.55\% & 81.97\% & 6.04e-12 & 99.68\% & 98.60\% & 97.64\% & 2.83e-26 \\
SynthID($m$=30) & 93.83\% & 88.36\% & 83.50\% & 1.91e-12 & 99.24\% & 98.37\% & 97.50\% & 4.07e-28 \\
SynthID($m$=40) & 92.12\% & 87.58\% & 82.42\% & 4.79e-12 & 99.24\% & 97.84\% & 96.11\% & 7.70e-28 \\ \midrule
MCMark($l$=20) & 94.36\% & 90.37\% & 86.07\% & 4.18e-13 & 99.34\% & 98.45\% & 97.01\% & 8.30e-26 \\
ENS-MCMark($l$=20, $n$=3) & \textbf{95.70\%} & \textbf{91.71\%} & \textbf{87.93\%} & \textbf{1.43e-14} & \textbf{99.89\%} & \textbf{99.57\%} & \textbf{98.59\%} & 2.58e-31 \\
ENS-MCMark($l$=20, $n$=5) & 95.15\% & 91.44\% & 85.88\% & 4.27e-14 & 99.12\% & 98.90\% & 98.02\% & \textbf{1.27e-35} \\ \bottomrule
\end{tabular}%
}
\end{table}

\begin{table}[t]
\centering
\caption{Robustness comparison under paraphrasing attacks using GPT-4o-mini and DIPPER.}
\label{tab:paraphrase_comparison}
\resizebox{\textwidth}{!}{%
\begin{tabular}{@{}l|ccc c|ccc c@{}}
\toprule
 \multirow{3}{*}{\begin{tabular}[c]{@{}l@{}}Watermarking\\ Methods\end{tabular}} & \multicolumn{4}{c|}{GPT-4o-mini paraphrase} & \multicolumn{4}{c}{DIPPER} \\ 
\cmidrule(lr){2-5} \cmidrule(lr){6-9}
 & \multicolumn{3}{c}{TPR@FPR} & \multirow{2}{*}{\begin{tabular}[c]{@{}c@{}}Median\\ p-value$\downarrow$\end{tabular}} 
 & \multicolumn{3}{c}{TPR@FPR} & \multirow{2}{*}{\begin{tabular}[c]{@{}c@{}}Median\\ p-value$\downarrow$\end{tabular}} \\ 
\cmidrule(lr){2-4} \cmidrule(lr){6-8}
 & 0.1\%$\uparrow$ & 0.01\%$\uparrow$ & 0.001\%$\uparrow$ & & 
   0.1\%$\uparrow$ & 0.01\%$\uparrow$ & 0.001\%$\uparrow$ & \\ \midrule
ENS-DiPmark($\alpha$=0.3, $n$=5) & 8.41\% & 5.14\% & 2.80\% & 3.15e-1 & 3.26\% & 1.09\% & 0.00\% & 2.61e-1 \\
ENS-DiPmark($\alpha$=0.4, $n$=5) & 7.11\% & 3.56\% & 3.56\% & 2.13e-1 & 4.38\% & 1.03\% & 0.77\% & 3.55e-1 \\
ENS-$\gamma$-reweight($n$=5)     & 6.73\% & 3.59\% & 3.14\% & 2.78e-1 & 3.05\% & 1.39\% & 0.83\% & 3.97e-1 \\
SynthID($m$=30)                & 25.39\% & 13.47\% & 6.74\% & 1.72e-2 & 21.58\% & 11.05\% & 7.11\% & 2.06e-2 \\
ENS-MCMark($l$=20, $n$=3)        & \textbf{40.00\%} & \textbf{29.44\%} & \textbf{20.56\%} & \textbf{3.69e-3} & 
                                 \textbf{40.35\%} & \textbf{30.70\%} & \textbf{22.22\%} & \textbf{5.09e-3} \\ 
\bottomrule
\end{tabular}
}
\vspace{-0.5cm}
\end{table}

\begin{table}[t]
\centering
\caption{Robustness comparison under GPT-4o-mini back translation (English–French) attacks using GPT-4o-mini and 10\% random token replacement.}
\label{tab:back_translation_vs_random}
\resizebox{\textwidth}{!}{%
\begin{tabular}{@{}l|ccc c|ccc c@{}}
\toprule
 \multirow{3}{*}{\begin{tabular}[c]{@{}l@{}}Watermarking\\ Methods\end{tabular}} & \multicolumn{4}{c|}{Back translation (En–Fr)} & \multicolumn{4}{c}{10\% Random token replacement} \\ 
\cmidrule(lr){2-5} \cmidrule(lr){6-9}
 & \multicolumn{3}{c}{TPR@FPR} & \multirow{2}{*}{\begin{tabular}[c]{@{}c@{}}Median\\ p-value$\downarrow$\end{tabular}} 
 & \multicolumn{3}{c}{TPR@FPR} & \multirow{2}{*}{\begin{tabular}[c]{@{}c@{}}Median\\ p-value$\downarrow$\end{tabular}} \\ 
\cmidrule(lr){2-4} \cmidrule(lr){6-8}
 & 0.1\%$\uparrow$ & 0.01\%$\uparrow$ & 0.001\%$\uparrow$ & & 
   0.1\%$\uparrow$ & 0.01\%$\uparrow$ & 0.001\%$\uparrow$ & \\ \midrule
ENS-DiPmark($\alpha$=0.3, $n$=5) & 38.74\% & 26.31\% & 18.74\% & 4.61e-3 & 38.74\% & 26.31\% & 18.74\% & 4.61e-3 \\
ENS-DiPmark($\alpha$=0.4, $n$=5) & 44.38\% & 29.83\% & 20.99\% & 2.35e-3 & 44.38\% & 29.83\% & 20.99\% & 2.35e-3 \\
ENS-$\gamma$-reweight($n$=5)     & 41.71\% & 28.86\% & 21.42\% & 3.53e-3 & 41.71\% & 28.86\% & 21.42\% & 3.53e-3 \\
SynthID($m$=30)                & 75.69\% & 64.53\% & 55.58\% & 3.16e-6 & 75.69\% & 64.53\% & 55.58\% & 3.16e-6 \\
ENS-MCMark($l$=20, $n$=3)        & \textbf{84.17\%} & \textbf{76.43\%} & \textbf{68.53\%} & \textbf{1.94e-8} & 
                                 \textbf{84.17\%} & \textbf{76.43\%} & \textbf{68.53\%} & \textbf{1.94e-8} \\ 
\bottomrule
\end{tabular}
}
\vspace{-0.5cm}
\end{table}

\section{Experiments}

Our experiments comprise three main parts. First, we evaluate the detectability gains of our ensemble watermark by comparing it with other unbiased watermarking methods on a text generation task. Second, we assess robustness under random token modifications, DIPPER paraphrasing attacks~\citep{krishna2023paraphrasing}, GPT paraphrasing attacks and GPT back translation attacks. Finally, we verify the unbiasedness of our method by showing that its output quality on machine translation and text summarization tasks closely matches that of the unwatermarked baseline. All experiments are conducted on NVIDIA A6000 GPUs. All watermarking algorithms introduced negligible computation cost during LLM generation process. Detailed experimental settings are provided in Appendix~\ref{sec:detailed_experiment_setup}.

\paragraph{Baselines.} We evaluate our method against several baselines, including three logit-based unbiased watermarking algorithms: $\gamma$-reweight~\citep{hu2023unbiased}, DiPmark~\citep{wu2023dipmark}, and MCmark~\citep{chen2025improved}, as well as one sampling-based unbiased watermarking algorithm~\citep{dathathri2024scalable}. While other sampling-based unbiased methods such as ITS-edit~\citep{kuditipudi2023robust} and STA-1~\citep{mao2024watermark} exist, prior work~\citep{chen2025improved} has shown that they perform worse than MCmark, and therefore we omit them from our experiments.

\paragraph{Models and Datasets.}\
We employ Llama-3.2-3B-Instruct \citep{dubey2024llama}, Mistral-7B-Instruct-v0.3 \citep{jiang2023mistral}, and Phi-3.5-mini-instruct \citep{abdin2024phi} for text generation tasks to assess the effectiveness of our proposed \methodname. Following prior work \citep{kirchenbauer2023watermark,hu2023unbiased}, we conduct experiments on a subset of the C4 dataset \citep{raffel2020exploring}. In addition, we include evaluations on three MMW datasets \citep{piet2023mark}, Dolly CW \citep{DatabricksBlog2023DollyV2}, and two tasks from WaterBench \citep{tu2023waterbench}.

For unbiasedness validation, we adopt the settings from~\citet{hu2023unbiased,wu2023dipmark}, using MBart~\citep{liu2020multilingual} for machine translation and BART~\citep{lewis2019bart} for text summarization. In the machine translation experiments, we use the WMT16 ro-en dataset~\citep{bojar-EtAl:2016:WMT1}. For text summarization, we use the CNN/DailyMail dataset~\citep{see-etal-2017-get}.

\paragraph{Watermarking parameters.} We evaluate the detectability of \methodname\ on the text generation task with different language models. We generate ~1,000 examples for each task. We use the prefix 2-gram together with a secret key as the watermark keys. We select $\alpha \in\{ 0.3, 0.4\}$ for DiPmark~\citep{wu2023dipmark}, tournament sampling layers $m\in\{20,30,40\}$ for SynthID~\citep{dathathri2024scalable}, $l=20$ for MCmark~\citep{chen2025improved}. For $\gamma$-reweight~\citep{hu2023unbiased}, we follow the settings in the original papers. For watermark ensemble we select $n\in\{1,5,10\}$ for DiPmark and $\gamma$ reweight and $n\in\{1,3,5\}$ for MCmark. We report true positive rate under x\% theoretical guaranteed false positive rate (TPR@FPR) and the Median p-value.
\begin{table}[t]
\centering
\caption{Unbiasedness comparison of different watermarking methods on text summarization and machine translation tasks.}
\label{tab:unbiasedness}
\resizebox{\textwidth}{!}{%
\begin{tabular}{@{}l|cccc|cc@{}}
\toprule
\multirow{3}{*}{\begin{tabular}[c]{@{}l@{}}Watermarking\\ Methods\end{tabular}}  & \multicolumn{4}{c|}{Text Summarization} & \multicolumn{2}{c}{Machine Translation} \\ \cmidrule(l){2-7} 
 & ROUGE-1 & ROUGE-2 & ROUGE-L & BERTScore & BLEU & BERTScore \\ \midrule
No Watermark & 0.3768 & 0.1327 & 0.2379 & 0.3175 & 20.35 & 0.5576 \\ \midrule
DiPmark($\alpha$=0.3) & 0.3767 & 0.1325 & 0.2384 & 0.3170 & 20.44 & 0.5583 \\
ENS-DiPmark($\alpha$=0.3, $n$=5) & 0.3760 & 0.1317 & 0.2375 & 0.3163 & 20.24 & 0.5555 \\
ENS-DiPmark($\alpha$=0.3, $n$=10) & 0.3768 & 0.1328 & 0.2383 & 0.3167 & 20.21 & 0.5588 \\ \midrule
DiPmark($\alpha$=0.4) & 0.3768 & 0.1330 & 0.2385 & 0.3178 & 20.35 & 0.5559 \\
ENS-DiPmark($\alpha$=0.4, $n$=5) & 0.3768 & 0.1326 & 0.2380 & 0.3166 & 20.36 & 0.5585 \\
ENS-DiPmark($\alpha$=0.4, $n$=10) & 0.3756 & 0.1322 & 0.2379 & 0.3163 & 20.38 & 0.5590 \\ \midrule
$\gamma$-reweight & 0.3767 & 0.1320 & 0.2376 & 0.3165 & 20.54 & 0.5588 \\
ENS-$\gamma$-reweight($n$=3) & 0.3769 & 0.1331 & 0.2385 & 0.3169 & 20.23 & 0.5577 \\
ENS-$\gamma$-reweight($n$=5) & 0.3759 & 0.1321 & 0.2378 & 0.3157 & 20.45 & 0.5579 \\ \midrule
SynthID($m$=20) & 0.3761 & 0.1323 & 0.2380 & 0.3169 & 19.82 & 0.5559 \\
SynthID($m$=30) & 0.3776 & 0.1331 & 0.2387 & 0.3175 & 20.15 & 0.5582 \\
SynthID($m$=40) & 0.3774 & 0.1336 & 0.2382 & 0.3173 & 20.28 & 0.5566 \\ \midrule
ENS-MCMark($l$=20, $n$=1) & 0.3769 & 0.1329 & 0.2386 & 0.3176 & 19.83 & 0.5543 \\
ENS-MCMark($l$=20, $n$=3) & 0.3767 & 0.1325 & 0.2380 & 0.3170 & 20.43 & 0.5589 \\
ENS-MCMark($l$=20, $n$=5) & 0.3769 & 0.1333 & 0.2388 & 0.3177 & 20.19 & 0.5631 \\ \bottomrule
\end{tabular}%
}
\vspace{-0.5cm}
\end{table}
\subsection{Detectability}
The results in Table~\ref{tab:detectability} clearly demonstrate that our ensemble strategy consistently enhances the detectability of logit-based watermarking methods. For DiPmark and $\gamma$-reweight, applying the ensemble scheme substantially boosts TPR across all false positive rate thresholds and reduces median $p$-values, confirming that aggregation over multiple keys strengthens statistical power. More importantly, when comparing against strong baselines such as SynthID and MCMark, our ensemble framework achieves state-of-the-art performance. In particular, ENS-MCMark reaches the highest TPRs and the lowest $p$-values under both 250- and 500-token settings, surpassing all competing methods and establishing our ensemble method as the most effective approach for watermark detectability

\vspace{-0.1cm}
\subsection{Robustness}
\vspace{-0.1cm}
To comprehensively evaluate robustness, we conduct experiments under 4 challenging text corruption attacks: GPT-4o-mini paraphrasing, DIPPER paraphrasing, GPT-4o-mini English–French back translation, and 10\% random token replacement. These transformations substantially alter surface forms while preserving semantics, providing a rigorous stress test for watermark detectability. As shown in Tables~\ref{tab:paraphrase_comparison} and \ref{tab:back_translation_vs_random}, all watermarking methods experience degraded performance under these attacks. Nevertheless, our ensemble framework consistently yields significant improvements, with ENS-MCMark achieving the highest TPR across all FPR thresholds and the lowest $p$-values in every attack scenario. These results highlight that, even under strong paraphrasing and token-level perturbations, our ensemble method maintains state-of-the-art detectability, clearly outperforming existing baselines.

\vspace{-0.1cm}
\subsection{Unbiasedness}
\vspace{-0.1cm}
To assess the unbiasedness of watermarking, we evaluate generation quality across two representative tasks: text summarization and machine translation, using multiple standard metrics. For summarization, we report ROUGE-1/2/L and BERTScore, while for translation we adopt BLEU and BERTScore. See Appendix~\ref{sec:detailed_experiment_setup} for a detailed introduction of the metrics. As shown in Table~\ref{tab:unbiasedness}, all watermarking methods, including our ensemble variants, achieve scores that are nearly identical to the no-watermark baseline. This indicates that, similar to other unbiased watermarking approaches, our ensemble framework does not degrade generation quality. The consistency across diverse metrics and tasks confirms that the improved detectability of our ensemble method comes without sacrificing semantic fidelity or fluency of the generated outputs.

\graphicspath{{figures/}} 

\begin{figure*}[t]
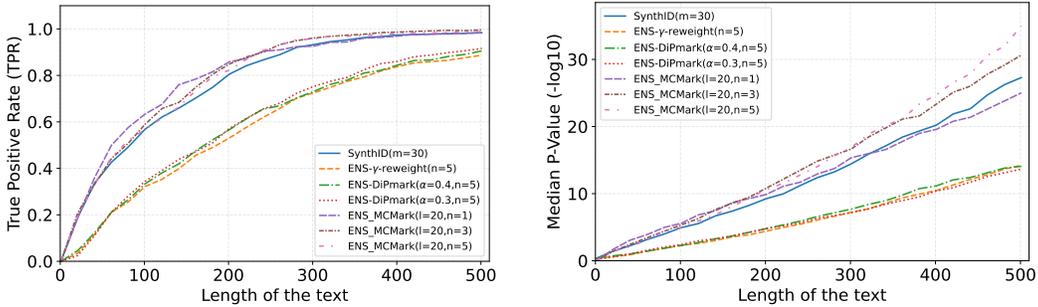

  \centering
  \begin{minipage}[t]{0.48\textwidth}\centering
    \includegraphics[width=\linewidth]{combined_acc_vs_len_Llama_3.2_3B_Instruct_c4_subset.pdf}
    
  \end{minipage}\hfill
  \begin{minipage}[t]{0.48\textwidth}\centering
    \includegraphics[width=\linewidth]{combined_median_p_vs_len_Llama_3.2_3B_Instruct_c4_subset.pdf}
  \end{minipage}
  \vspace{-0.5cm}
  \caption{Effect of the generation length on the detectability. Left: TPR@0.01\%FPR vs. generation length. Right: median $p$-value vs. generation length.}
  \label{fig:llama_pair}
\end{figure*}

\begin{figure}[t]
    \centering
    \includegraphics[width=\textwidth]{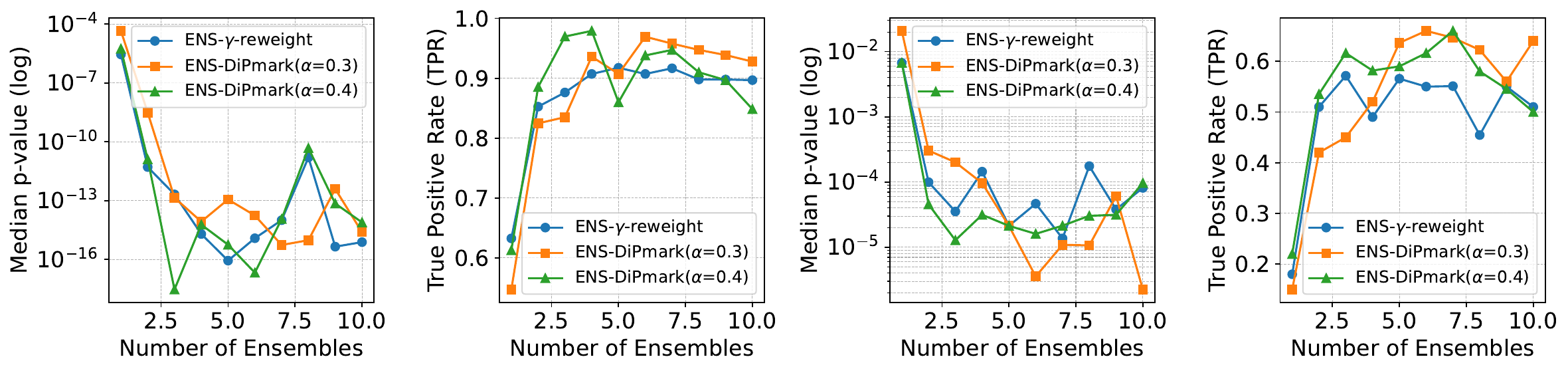}
    \vspace{-0.5cm}
    \caption{Effect of the number of ensembles on detectability. We compare $\gamma$-reweight and DiPmark ($\alpha=0.3, 0.4$) under generation lengths 250 and 500. 
    Left two plots: length=500; right two plots: length=250.}
    \label{fig:dipmark_vs_gamma_all}
    \vspace{-0.3cm}
\end{figure}

\subsection{Ablation study}
In this section, we study the effect of generation length and ensemble size on detectability. All experiments are conducted on Llama-3.2-3B-Instruct using the C4 subset, and we report TPR@0.01\% FPR together with the median $p$-value.

\paragraph{Detectability vs. generation length.}
As shown in Figure~\ref{fig:llama_pair}, increasing the generation length consistently improves detectability for all watermarking methods. Longer sequences provide more statistical evidence, thereby reducing the variance of detection scores and leading to both higher TPR and lower median $p$-values. Among the baselines, we observe that ENS-MCmark with $n=3$ achieves the best overall detectability across lengths. Compared to the original MCmark baseline (\textit{ENS-MCmark}$(\ell=20,n=1)$), our ensemble framework significantly boosts detection power while maintaining robustness. These results highlight the importance of sequence length in watermark detection and further confirm the benefit of ensemble-based designs.

\paragraph{Detectability vs. number of ensembles.}
We further analyze the effect of ensemble size using DiPmark and $\gamma$-reweight with $n=1,2,\dots,10$ under generation lengths of $250$ and $500$ (Figure~\ref{fig:dipmark_vs_gamma_all}). Interestingly, detectability does not grow monotonically with $n$: we observe that detection power initially improves as ensembles aggregate complementary evidence, but then gradually declines when $n$ becomes large. This non-monotonic trend is consistent with our theoretical analysis in Sec.~\ref{sec:ensemble size}, where excessive ensemble averaging introduces redundancy and dilutes the effective signal. In particular, moderate ensemble sizes (e.g., $n=3$–$5$) provide the best trade-off, achieving the lowest $p$-values and the highest TPR across both generation lengths.






\section{Conclusion}\label{sec:conclusion}
In this work, we introduced \methodname, a principled ensemble framework for unbiased watermarking that amplifies detection signals while rigorously preserving the underlying language model distribution. By composing multiple independent unbiased watermarks, \methodname\ achieves a provable $\sqrt{n}$ gain in signal-to-noise ratio without sacrificing imperceptibility. Our theoretical analysis confirms that the unbiasedness property holds under independent keys, and our experiments demonstrate consistent improvements in detection accuracy and robustness to text modification attacks across diverse model families and baseline methods. These results suggest that \methodname\ offers a practical and scalable path toward stronger, more reliable watermarking for real-world deployment, helping ensure provenance and accountability in the era of large-scale AI text generation.

\bibliography{iclr2026_conference}

\begin{thebibliography}{29}
\providecommand{\natexlab}[1]{#1}
\providecommand{\url}[1]{\texttt{#1}}
\expandafter\ifx\csname urlstyle\endcsname\relax
  \providecommand{\doi}[1]{doi: #1}\else
  \providecommand{\doi}{doi: \begingroup \urlstyle{rm}\Url}\fi

\bibitem[Aaronson(2022)]{Aaronson2022}
Scott Aaronson.
\newblock My {AI} safety lecture for {UT} effective altruism,.
\newblock 2022.
\newblock URL \url{https://scottaaronson.blog/?p=6823}.

\bibitem[Abdin et~al.(2024)Abdin, Aneja, Awadalla, Awadallah, Awan, Bach, Bahree, Bakhtiari, Bao, Behl, et~al.]{abdin2024phi}
Marah Abdin, Jyoti Aneja, Hany Awadalla, Ahmed Awadallah, Ammar~Ahmad Awan, Nguyen Bach, Amit Bahree, Arash Bakhtiari, Jianmin Bao, Harkirat Behl, et~al.
\newblock Phi-3 technical report: A highly capable language model locally on your phone.
\newblock \emph{arXiv preprint arXiv:2404.14219}, 2024.

\bibitem[Bojar et~al.(2016)Bojar, Chatterjee, Federmann, Graham, Haddow, Huck, Jimeno~Yepes, Koehn, Logacheva, Monz, Negri, Neveol, Neves, Popel, Post, Rubino, Scarton, Specia, Turchi, Verspoor, and Zampieri]{bojar-EtAl:2016:WMT1}
Ond~{r}ej Bojar, Rajen Chatterjee, Christian Federmann, Yvette Graham, Barry Haddow, Matthias Huck, Antonio Jimeno~Yepes, Philipp Koehn, Varvara Logacheva, Christof Monz, Matteo Negri, Aurelie Neveol, Mariana Neves, Martin Popel, Matt Post, Raphael Rubino, Carolina Scarton, Lucia Specia, Marco Turchi, Karin Verspoor, and Marcos Zampieri.
\newblock Findings of the 2016 conference on machine translation.
\newblock In \emph{Proceedings of the First Conference on Machine Translation}, pp.\  131--198, Berlin, Germany, August 2016. Association for Computational Linguistics.
\newblock URL \url{http://www.aclweb.org/anthology/W/W16/W16-2301}.

\bibitem[Chen et~al.(2024{\natexlab{a}})Chen, Wu, Guo, and Huang]{chen2024mark}
Ruibo Chen, Yihan Wu, Junfeng Guo, and Heng Huang.
\newblock De-mark: Watermark removal in large language models.
\newblock \emph{arXiv preprint arXiv:2410.13808}, 2024{\natexlab{a}}.

\bibitem[Chen et~al.(2025)Chen, Wu, Guo, and Huang]{chen2025improved}
Ruibo Chen, Yihan Wu, Junfeng Guo, and Heng Huang.
\newblock Improved unbiased watermark for large language models.
\newblock \emph{arXiv preprint arXiv:2502.11268}, 2025.

\bibitem[Chen et~al.(2024{\natexlab{b}})Chen, Hu, Wu, Chen, Jin, Chen, and Huang]{chen2024enhancing}
Yanshuo Chen, Zhengmian Hu, Yihan Wu, Ruibo Chen, Yongrui Jin, Wei Chen, and Heng Huang.
\newblock Enhancing biosecurity with watermarked protein design.
\newblock \emph{bioRxiv}, pp.\  2024--05, 2024{\natexlab{b}}.

\bibitem[Christ et~al.(2023)Christ, Gunn, and Zamir]{christ2023undetectable}
Miranda Christ, Sam Gunn, and Or~Zamir.
\newblock Undetectable watermarks for language models.
\newblock \emph{arXiv preprint arXiv:2306.09194}, 2023.

\bibitem[Conover et~al.(2023)Conover, Hayes, Mathur, Xie, Wan, Shah, Ghodsi, Wendell, Zaharia, and Xin]{DatabricksBlog2023DollyV2}
Mike Conover, Matt Hayes, Ankit Mathur, Jianwei Xie, Jun Wan, Sam Shah, Ali Ghodsi, Patrick Wendell, Matei Zaharia, and Reynold Xin.
\newblock Free dolly: Introducing the world's first truly open instruction-tuned llm, 2023.
\newblock URL \url{https://www.databricks.com/blog/2023/04/12/dolly-first-open-commercially-viable-instruction-tuned-llm}.

\bibitem[Dathathri et~al.(2024)Dathathri, See, Ghaisas, Huang, McAdam, Welbl, Bachani, Kaskasoli, Stanforth, Matejovicova, et~al.]{dathathri2024scalable}
Sumanth Dathathri, Abigail See, Sumedh Ghaisas, Po-Sen Huang, Rob McAdam, Johannes Welbl, Vandana Bachani, Alex Kaskasoli, Robert Stanforth, Tatiana Matejovicova, et~al.
\newblock Scalable watermarking for identifying large language model outputs.
\newblock \emph{Nature}, 634\penalty0 (8035):\penalty0 818--823, 2024.

\bibitem[Dubey et~al.(2024)Dubey, Jauhri, Pandey, Kadian, Al-Dahle, Letman, Mathur, Schelten, Yang, Fan, et~al.]{dubey2024llama}
Abhimanyu Dubey, Abhinav Jauhri, Abhinav Pandey, Abhishek Kadian, Ahmad Al-Dahle, Aiesha Letman, Akhil Mathur, Alan Schelten, Amy Yang, Angela Fan, et~al.
\newblock The llama 3 herd of models.
\newblock \emph{arXiv preprint arXiv:2407.21783}, 2024.

\bibitem[Hermann et~al.(2015)Hermann, Kocisky, Grefenstette, Espeholt, Kay, Suleyman, and Blunsom]{hermann2015teaching}
Karl~Moritz Hermann, Tomas Kocisky, Edward Grefenstette, Lasse Espeholt, Will Kay, Mustafa Suleyman, and Phil Blunsom.
\newblock Teaching machines to read and comprehend.
\newblock \emph{Advances in neural information processing systems}, 28, 2015.

\bibitem[Hu et~al.(2023)Hu, Chen, Wu, Wu, Zhang, and Huang]{hu2023unbiased}
Zhengmian Hu, Lichang Chen, Xidong Wu, Yihan Wu, Hongyang Zhang, and Heng Huang.
\newblock Unbiased watermark for large language models.
\newblock \emph{arXiv preprint arXiv:2310.10669}, 2023.

\bibitem[Jiang et~al.(2023)Jiang, Sablayrolles, Mensch, Bamford, Chaplot, Casas, Bressand, Lengyel, Lample, Saulnier, et~al.]{jiang2023mistral}
Albert~Q Jiang, Alexandre Sablayrolles, Arthur Mensch, Chris Bamford, Devendra~Singh Chaplot, Diego de~las Casas, Florian Bressand, Gianna Lengyel, Guillaume Lample, Lucile Saulnier, et~al.
\newblock Mistral 7b.
\newblock \emph{arXiv preprint arXiv:2310.06825}, 2023.

\bibitem[Kirchenbauer et~al.(2023)Kirchenbauer, Geiping, Wen, Katz, Miers, and Goldstein]{kirchenbauer2023watermark}
John Kirchenbauer, Jonas Geiping, Yuxin Wen, Jonathan Katz, Ian Miers, and Tom Goldstein.
\newblock A watermark for large language models.
\newblock \emph{arXiv preprint arXiv:2301.10226}, 2023.

\bibitem[Krishna et~al.(2023)Krishna, Song, Karpinska, Wieting, and Iyyer]{krishna2023paraphrasing}
Kalpesh Krishna, Yixiao Song, Marzena Karpinska, John Wieting, and Mohit Iyyer.
\newblock Paraphrasing evades detectors of ai-generated text, but retrieval is an effective defense.
\newblock \emph{Advances in Neural Information Processing Systems}, 36:\penalty0 27469--27500, 2023.

\bibitem[Kuditipudi et~al.(2023)Kuditipudi, Thickstun, Hashimoto, and Liang]{kuditipudi2023robust}
Rohith Kuditipudi, John Thickstun, Tatsunori Hashimoto, and Percy Liang.
\newblock Robust distortion-free watermarks for language models.
\newblock \emph{arXiv preprint arXiv:2307.15593}, 2023.

\bibitem[Lewis(2019)]{lewis2019bart}
Mike Lewis.
\newblock Bart: Denoising sequence-to-sequence pre-training for natural language generation, translation, and comprehension.
\newblock \emph{arXiv preprint arXiv:1910.13461}, 2019.

\bibitem[Lin(2004)]{lin2004rouge}
Chin-Yew Lin.
\newblock Rouge: A package for automatic evaluation of summaries.
\newblock In \emph{Text summarization branches out}, pp.\  74--81, 2004.

\bibitem[Liu(2020)]{liu2020multilingual}
Y~Liu.
\newblock Multilingual denoising pre-training for neural machine translation.
\newblock \emph{arXiv preprint arXiv:2001.08210}, 2020.

\bibitem[Liu \& Bu(2024)Liu and Bu]{liu2024adaptive}
Yepeng Liu and Yuheng Bu.
\newblock Adaptive text watermark for large language models.
\newblock \emph{arXiv preprint arXiv:2401.13927}, 2024.

\bibitem[Mao et~al.(2024)Mao, Wei, Chen, Fang, and Chau]{mao2024watermark}
Minjia Mao, Dongjun Wei, Zeyu Chen, Xiao Fang, and Michael Chau.
\newblock A watermark for low-entropy and unbiased generation in large language models.
\newblock \emph{arXiv preprint arXiv:2405.14604}, 2024.

\bibitem[Papineni et~al.(2002)Papineni, Roukos, Ward, and Zhu]{papineni2002bleu}
Kishore Papineni, Salim Roukos, Todd Ward, and Wei-Jing Zhu.
\newblock Bleu: a method for automatic evaluation of machine translation.
\newblock In \emph{Proceedings of the 40th annual meeting of the Association for Computational Linguistics}, pp.\  311--318, 2002.

\bibitem[Piet et~al.(2023)Piet, Sitawarin, Fang, Mu, and Wagner]{piet2023mark}
Julien Piet, Chawin Sitawarin, Vivian Fang, Norman Mu, and David Wagner.
\newblock Mark my words: Analyzing and evaluating language model watermarks.
\newblock \emph{arXiv preprint arXiv:2312.00273}, 2023.

\bibitem[Raffel et~al.(2020)Raffel, Shazeer, Roberts, Lee, Narang, Matena, Zhou, Li, and Liu]{raffel2020exploring}
Colin Raffel, Noam Shazeer, Adam Roberts, Katherine Lee, Sharan Narang, Michael Matena, Yanqi Zhou, Wei Li, and Peter~J Liu.
\newblock Exploring the limits of transfer learning with a unified text-to-text transformer.
\newblock \emph{Journal of machine learning research}, 21\penalty0 (140):\penalty0 1--67, 2020.

\bibitem[See et~al.(2017)See, Liu, and Manning]{see-etal-2017-get}
Abigail See, Peter~J. Liu, and Christopher~D. Manning.
\newblock Get to the point: Summarization with pointer-generator networks.
\newblock In \emph{Proceedings of the 55th Annual Meeting of the Association for Computational Linguistics (Volume 1: Long Papers)}, pp.\  1073--1083, Vancouver, Canada, July 2017. Association for Computational Linguistics.
\newblock \doi{10.18653/v1/P17-1099}.
\newblock URL \url{https://www.aclweb.org/anthology/P17-1099}.

\bibitem[Tu et~al.(2023)Tu, Sun, Bai, Yu, Hou, and Li]{tu2023waterbench}
Shangqing Tu, Yuliang Sun, Yushi Bai, Jifan Yu, Lei Hou, and Juanzi Li.
\newblock Waterbench: Towards holistic evaluation of watermarks for large language models.
\newblock \emph{arXiv preprint arXiv:2311.07138}, 2023.

\bibitem[Wolf et~al.(2019)Wolf, Debut, Sanh, Chaumond, Delangue, Moi, Cistac, Rault, Louf, Funtowicz, et~al.]{wolf2019huggingface}
Thomas Wolf, Lysandre Debut, Victor Sanh, Julien Chaumond, Clement Delangue, Anthony Moi, Pierric Cistac, Tim Rault, R{\'e}mi Louf, Morgan Funtowicz, et~al.
\newblock Huggingface's transformers: State-of-the-art natural language processing.
\newblock \emph{arXiv preprint arXiv:1910.03771}, 2019.

\bibitem[Wu et~al.(2023)Wu, Hu, Zhang, and Huang]{wu2023dipmark}
Yihan Wu, Zhengmian Hu, Hongyang Zhang, and Heng Huang.
\newblock Dipmark: A stealthy, efficient and resilient watermark for large language models.
\newblock \emph{arXiv preprint arXiv:2310.07710}, 2023.

\bibitem[Zhang et~al.(2019)Zhang, Kishore, Wu, Weinberger, and Artzi]{zhang2019bertscore}
Tianyi Zhang, Varsha Kishore, Felix Wu, Kilian~Q Weinberger, and Yoav Artzi.
\newblock Bertscore: Evaluating text generation with bert.
\newblock \emph{arXiv preprint arXiv:1904.09675}, 2019.

\end{thebibliography}
\bibliographystyle{iclr2026_conference}

\appendix
\clearpage
\section{LLM Usage}
We ONLY used ChatGPT-4o and ChatGPT-5 to refine the content.
\section{Watermarking Algorithms}
\begin{algorithm}[h]
\caption{\methodname\ generator.}\label{alg:generator}
\begin{algorithmic}[1]
\State \textbf{Input:} pretrained LM $P_M$, secret keys $\textsf{sk}_1,\cdots,\textsf{sk}_n$, prompt $\bm{x}_{-m:0}$, generate length $T\in\mathbb{N}$, n-gram window length $a$, logits reweight strategy $F$, hash function $h$, n-gram history $hist$.
\For{$t=1,\dots,T$}
\State Initialize $P_{M,w}^{(0)}=P_M$
\If{$\x_{t-a,t-1}\in hist$}
\State Sampling from the original distribution $P_{M}(\cdot|\x_{-m:t-1})$.
\Else
\State Update $hist$ with $\x_{t-a,t-1}$.
\For{$i=1,\dots,n$}
\State Generate watermark key $k_i=h(\sk_{i},\x_{t-a,t-1})$.
\State $P_{M,w}^{(i)}(\cdot|\x_{-m:t-1}) := F(P_{M,w}^{(i-1)}(\cdot|\x_{-m:t-1})|k_i)$.
\EndFor
\State Sample the next token $x_{t}$ from $P_{M,w}^{(n)}(\cdot|\x_{-m:t-1})$.
\EndIf
\EndFor
\State \textbf{return} $\bm{x}_{1:T}$.
\end{algorithmic}
\end{algorithm}

\begin{algorithm}[h]
\caption{\methodname\ detector.}\label{alg:detector}
\begin{algorithmic}[1]
\State \textbf{Input:} pretrained LM $P_M$, secret keys $\textsf{sk}_1,\cdots,\textsf{sk}_n$, generated tokens $\bm{x}_{1:T}$, threshold $\Phi_0$, score function $s$, logits reweight strategy $F$, hash function $h$.
\State Initialize $\Phi=0$
\For{$t=1,\dots,T$}
\For{$i=1,\dots,n$}
    \State Recover the watermark key $k_i=h(\sk_{i},\x_{t-a,t-1})$.
    \State $\Phi=\Phi+s(x_t|F,k_i)$.
\EndFor
\EndFor
\If{$\Phi\geq \Phi_0$}
\State \textbf{return} $\bm{x}_{1:T}$ is watermarked.
\Else
\State \textbf{return} $\bm{x}_{1:T}$ is not watermarked.
\EndIf
\end{algorithmic}
\end{algorithm}
\section{Missing Proofs}
\begin{theorem}[Unbiasedness] \label{thm:ensemble-unbiased} If $F$ is an unbiased logits-based reweight strategy, and the watermark keys $\k_{1:n}$ are i.i.d. from $P_\cK$, then the $n$-ensemble of $F$ is also an unbiased logits-based reweight strategy, i.e., 
\begin{equation}
    \mathbb{E}_{\k_{1:n}\sim P_\cK^n}[\ens(n,F,P_M(x_{t+1}|\x_{1:t}),\k_{1:n})]=P_{M}(x_{t+1} \mid \x_{1:t})
\end{equation} holds for arbitrary $P_M(\cdot|\x_{1:t})\in\cP$ and $x_{t+1}\in V$,
\end{theorem}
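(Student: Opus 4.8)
The plan is to prove the statement by induction on the ensemble size $n$, exploiting the tower property of conditional expectation. The key structural observation is that the recursive definition of $\ens$ lets us peel off the outermost application of $F$, which is controlled by a single key $k_n$ that is independent of the earlier keys $\k_{1:n-1}$.

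First I would establish the base case $n=1$: by definition $\ens(1,F,P_M,\k_1)=F(P_M(\cdot\mid\x_{1:t}),k_1)$, so $\mathbb{E}_{k_1\sim P_\cK}[\ens(1,\ldots)]=\mathbb{E}_{k_1\sim P_\cK}[F(P_M(\cdot\mid\x_{1:t}),k_1)]=P_M(\cdot\mid\x_{1:t})$, which is exactly the hypothesis Eq.~\ref{eq:unbiased-F} that $F$ is unbiased. Next, assuming the claim holds for $n-1$, I would fix an arbitrary token $v\in V$ and condition on $\k_{1:n-1}$. Since the keys are i.i.d., $k_n$ is independent of $\k_{1:n-1}$, so the law of iterated expectation gives
\[
\mathbb{E}_{\k_{1:n}}\bigl[\ens_n(P_M)(v)\bigr]
= \mathbb{E}_{\k_{1:n-1}}\Bigl[\mathbb{E}_{k_n}\bigl[F\bigl(\ens_{n-1}(P_M),k_n\bigr)(v)\,\big|\,\k_{1:n-1}\bigr]\Bigr].
\]
The crucial point for the inner expectation is that unbiasedness of $F$ in Eq.~\ref{eq:unbiased-F} is a statement that holds \emph{for every} input distribution $P\in\cP$, not merely for the original $P_M$. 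Conditioned on $\k_{1:n-1}$, the distribution $\ens_{n-1}(P_M)$ is a fixed (deterministic) element of $\cP$, so applying Eq.~\ref{eq:unbiased-F} with $P=\ens_{n-1}(P_M)$ yields $\mathbb{E}_{k_n}[F(\ens_{n-1}(P_M),k_n)(v)\mid\k_{1:n-1}]=\ens_{n-1}(P_M)(v)$. Substituting back and invoking the induction hypothesis $\mathbb{E}_{\k_{1:n-1}}[\ens_{n-1}(P_M)(v)]=P_M(v\mid\x_{1:t})$ completes the induction.

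The main obstacle to watch for is the measure-theoretic subtlety in justifying the tower property and the "plug in a random distribution" step: one must be careful that $\ens_{n-1}(P_M)$, as a function of $\k_{1:n-1}$, is measurable into $\cP$ so that the conditional expectation is well-defined, and that Eq.~\ref{eq:unbiased-F} can be applied pointwise for each realization of $\k_{1:n-1}$. Since $\cP$ is a finite-dimensional simplex (the vocabulary $V$ is finite) and $F$ is a deterministic map $\cP\times\cK\to\cP$, this measurability is automatic and no integrability issues arise (all quantities lie in $[0,1]$), so the argument is clean; the rest is bookkeeping. A minor point is that the statement is pointwise in $v$, so I would phrase the whole induction at the level of the scalar $\ens_n(P_M)(v)$ for fixed $v$ and then note that it holds for all $v\in V$ simultaneously, giving the claimed distributional identity.
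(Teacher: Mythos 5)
Your proposal is correct and follows essentially the same route as the paper's proof: induction on $n$, peeling off the last key via the law of iterated expectation, applying the unbiasedness of $F$ to the (conditionally fixed) intermediate distribution $\ens_{n-1}(P_M)$, and then invoking the induction hypothesis. Your added remarks on measurability are sound but not needed here, since $V$ is finite and $F$ is a deterministic map on a finite-dimensional simplex, exactly as you note.
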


\begin{proof}
    We prove it by induction, when $n=1$, since $F$ is an unbiased logit-based reweight strategy, we have \begin{equation}
    \begin{split}
    \mathbb{E}_{k\sim P_\cK}[\ens(1,F,P_M(x_{t+1}|\x_{1:t}),k)]&=\mathbb{E}_{k\sim P_\cK}[F(P_{M}(x_{t+1} \mid \x_{1:t}|k))]\\&=P_{M}(x_{t+1} \mid \x_{1:t}).
    \end{split}
    \end{equation}
    When $m>1$, assuming $$\mathbb{E}_{\k_{1:n}\sim P_\cK^n}[\ens(n,F,P_M(x_{t+1}|\x_{1:t}),\k_{1:n})]=P_{M}(x_{t+1} \mid \x_{1:t})$$ holds for $n=m-1$, when $n=m$, since $k_m$ is independent of $\k_{1:m-1}$,
    \begin{equation}\label{eqn:proof1}
    \begin{split}
    &\mathbb{E}_{\k_{1:m}\sim P_\cK^m}[\ens(m,F,P_M(x_{t+1}|\x_{1:t}),\k_{1:m})]\\
    =&\mathbb{E}_{k_m\sim P_\cK,\k_{1:m-1}\sim P_\cK^{m-1}}[F(\ens(m-1,F,P_M(\cdot|\x_{1:t}),\k_{1:m-1})|k_m)],\\
    =&\mathbb{E}_{\k_{1:m-1}\sim P_\cK^{m-1}}[\bbE_{k_m\sim P_\cK}[F(\ens(m-1,F,P_M(\cdot|\x_{1:t}),\k_{1:m-1})|k_m)]]
    \end{split}
    \end{equation}
    Since $F$ is an unbiased logits-based reweight strategy, we have
    $$ \mathbb{E}_{k_m\sim P_\cK}[F(\ens(m-1,F,P_M(\cdot|\x_{1:t}),\k_{1:m-1})|k_m)] = \ens(m-1,F,P_M(\cdot|\x_{1:t}),\k_{1:m-1}).$$
    Together with Eq.~\ref{eqn:proof1},
    \begin{equation}
    \begin{split}
    &\mathbb{E}_{\k_{1:m}\sim P_\cK^m}[\ens(m,F,P_M(x_{t+1}|\x_{1:t}),\k_{1:m})]\\
    =&\mathbb{E}_{\k_{1:m-1}\sim P_\cK^{m-1}}[\ens(m-1,F,P_M(\cdot|\x_{1:t}),\k_{1:m-1})],\\
    \overset{\text{(*)}}{=}&P_{M}(x_{t+1} \mid \x_{1:t}).
    \end{split}
    \end{equation}
    (*) refers by induction assumption. Thus, the $n$-ensemble of $F$ is also an unbiased logits-based reweight strategy.
\end{proof}

\section{Experiment Setup} \label{sec:detailed_experiment_setup}
We assess the unbiasedness properties of various watermarking models across two seq2seq tasks: text summarization and machine translation. The experiments are implemented using the Huggingface library \citep{wolf2019huggingface}, a widely adopted framework in the NLP community for model training and sharing. All evaluations are performed on 8 NVIDIA A6000 GPUs, each equipped with 48GB of memory.

\textbf{Machine Translation.} For this task, we use the WMT'14 English (En) to Romanian (Ro) dataset, which includes 1,999 test examples. The Multilingual BART (MBart) model \citep{liu2020multilingual} is adopted, along with its official tokenizer.

\textbf{Text Summarization.} For summarization, we utilize the CNN-DM test set \citep{hermann2015teaching}, consisting of 11,490 examples. We evaluate with the BART-large model (400M parameters) and the LLaMA-2 model with 7B parameters.

\textbf{Evaluation Metrics for Text Quality.} To quantify generation quality, we adopt the following metrics:
\begin{itemize}
    \item \textbf{ROUGE Score.} Applied to summarization, ROUGE \citep{lin2004rouge} measures n-gram overlap between generated summaries and reference texts, reflecting content preservation.
    \item \textbf{BLEU Score.} For translation, BLEU \citep{papineni2002bleu} evaluates lexical similarity between system outputs and human references.
    \item \textbf{BERTScore.} BERTScore \citep{zhang2019bertscore} computes semantic similarity via token embedding cosine similarity. We report BERTScore-F1, -Precision, and -Recall for both summarization and translation.
    \item \textbf{Perplexity.} Derived from information theory, perplexity measures how well a probability model predicts text. Lower values indicate stronger predictive capacity. We use it to assess both summarization and text generation.
\end{itemize}

\section{Additional Results}
In this section, we provide all the experimental results on three LMs: Llama-3.2-3B-Instruct \citep{dubey2024llama}, Mistral-7B-Instruct-v0.3 \citep{jiang2023mistral}, and Phi-3.5-mini-instruct \citep{abdin2024phi}; and 7 datasets: C4 subset \citep{raffel2020exploring}, three MMW datasets \citep{piet2023mark}, Dolly CW \citep{DatabricksBlog2023DollyV2}, and two datasets from WaterBench \citep{tu2023waterbench}.

\graphicspath{{figures/apx_results}} 

\newcommand{\pairrow}[2]{%
  \noindent
  \begin{minipage}[t]{0.49\textwidth}\centering
    \includegraphics[width=\linewidth]{combined_acc_vs_len_#1_#2.pdf}
  \end{minipage}\hfill
  \begin{minipage}[t]{0.49\textwidth}\centering
    \includegraphics[width=\linewidth]{combined_median_p_vs_len_#1_#2.pdf}
  \end{minipage}\par\vspace{0.8em}
}


\pairrow{Llama_3.2_3B_Instruct}{c4_subset}
\pairrow{Llama_3.2_3B_Instruct}{dolly_cw}
\pairrow{Llama_3.2_3B_Instruct}{finance_qa}
\pairrow{Llama_3.2_3B_Instruct}{longform_qa}
\pairrow{Llama_3.2_3B_Instruct}{mmw_book_report}
\pairrow{Llama_3.2_3B_Instruct}{mmw_fake_news}
\pairrow{Llama_3.2_3B_Instruct}{mmw_story}

\pairrow{Mistral_7B_Instruct_v0.3}{c4_subset}
\pairrow{Mistral_7B_Instruct_v0.3}{dolly_cw}
\pairrow{Mistral_7B_Instruct_v0.3}{finance_qa}
\pairrow{Mistral_7B_Instruct_v0.3}{longform_qa}
\pairrow{Mistral_7B_Instruct_v0.3}{mmw_book_report}
\pairrow{Mistral_7B_Instruct_v0.3}{mmw_fake_news}
\pairrow{Mistral_7B_Instruct_v0.3}{mmw_story}

\pairrow{Phi_3.5_mini_instruct}{c4_subset}
\pairrow{Phi_3.5_mini_instruct}{dolly_cw}
\pairrow{Phi_3.5_mini_instruct}{finance_qa}
\pairrow{Phi_3.5_mini_instruct}{longform_qa}
\pairrow{Phi_3.5_mini_instruct}{mmw_book_report}
\pairrow{Phi_3.5_mini_instruct}{mmw_fake_news}
\pairrow{Phi_3.5_mini_instruct}{mmw_story}

\end{document}